\let\oldnl\nl
\newcommand{\nonl}{\renewcommand{\nl}{\let\nl\oldnl}}
\definecolor{darkgreen}{rgb}{0.2, 0.6, 0.05}
\newcounter{clemma}
\newtheorem{lemma}[clemma]{Lemma}
\newcommand\notsotiny{\@setfontsize\notsotiny\@vipt\@viipt}
\pgfplotsset{compat=1.13}
\newlength\fwidth
\newcommand{\E}{\mathbb{E}}                  
\newcommand{\bo}[1]{\mathbf{#1}}              
\newcommand{\bom}[1]{\boldsymbol{#1}}    
\newcommand{\pdim}{N}
\newcommand{\ndim}{L}
\newcommand{\hop}{\mathsf{H}}        
\newcommand{\beq}{\begin{equation}}
\newcommand{\eeq}{\end{equation}}
\newcommand{\bmat}{\begin{pmatrix}}
\newcommand{\emat}{\end{pmatrix}}
\renewcommand{\th}{\boldsymbol{\theta}}
\renewcommand{\S}{\hat{\M}} 
\newcommand{\C}{\mathbb{C}} 
\newcommand{\w}{\mathbf{w}}
\newcommand{\x}{\bo x}
\renewcommand{\a}{{\bo a}}
\newcommand{\Q}{\mathbf{Q}}
\newcommand{\M}{\bom \Sigma}
\DeclareMathOperator{\tr}{tr}
\DeclareMathOperator{\cov}{cov}
\begin{document}


\title{Greedy Capon Beamformer}
\author{Esa~Ollila,~\IEEEmembership{Senior Member,~IEEE}
\thanks{Esa Ollila is with the Department of Information and Communications Engineering, Aalto University, FI-00076 Aalto, Finland (e-mail: esa.ollila@aalto.fi).}}
\markboth{Greedy Capon Beamformer}
{Ollila \MakeLowercase{\textit{et al.}}}
\maketitle

\begin{abstract} 
We propose greedy  Capon beamformer (GCB) for direction finding of narrow-band sources present in the array's viewing field.  After defining the grid covering the  location search space, the algorithm greedily builds the interference-plus-noise covariance matrix by identifying a high-power source on the grid using Capon's principle of maximizing the signal to interference plus noise ratio while enforcing unit gain towards the signal of interest. An estimate of the power of the detected source is derived by exploiting the unit power constraint,  which subsequently  allows to update the  noise covariance matrix by simple rank-1 matrix addition composed of  outerproduct of the selected  steering matrix with itself  scaled by the signal power estimate.  Our numerical examples demonstrate  effectiveness of the proposed GCB in direction finding where it performs favourably  compared to the state-of-the-art algorithms under a broad variety of settings.  Furthermore, GCB estimates of direction-of-arrivals (DOAs) are very fast to compute. 
\end{abstract}

\begin{IEEEkeywords}
Beamforming, MVDR beamforming, direction finding, source localization, greedy pursuit
\end{IEEEkeywords}
\IEEEpeerreviewmaketitle

\section{Introduction}
\label{sec:intro}

\IEEEPARstart{C}{ommon}  to most direction finding (DF) methods is the need to estimate the unknown $N \times N$ array covariance matrix $\M=\cov(\x) \succ 0$ of the array output $\x \in \mathbb{C}^{\pdim}$  of $N$-sensors. For example, the conventional  beamformer  \cite{van_trees:2002}  and the standard Capon 
beamformer (SCB) \cite{capon:1969} require an estimate of $\M$ to measure the power of the beamformer output as a function of the direction-of-arrival (DOA). 
In addition, many high-resolution  subspace-based DOA algorithms (such as MUSIC \cite{schmidt:1986} or R(oot)-MUSIC \cite{barabell1983improving}) 
compute  the  noise  or  signal  subspaces  from the eigenvectors of the array covariance matrix.  Array covariance matrix is conventionally estimated from the array snapshots via the sample covariance matrix (SCM). 
However, the SCM is poorly estimated when the snapshot size is small and adaptive beamformers (such as SCB) 
 are not computable when $L < N$.  Sparse methods for DOA estimation \cite{yang2018sparse} provide a remedy for these issues. 

We assume that $K$ narrowband sources  are present in the array's 
viewing field. Let $\a(\theta) \in \mathbb{C}^N$ denote the array manifold, where $\theta$ denotes the  generic location parameter in the location space $\Theta$.  For example,  $\a(\theta)=(1,e^{-\jmath \cdot 1 \cdot  \frac{2\pi d}{\lambda}\sin\theta}, \ldots, e^{-\jmath \cdot (N-1) \cdot  \frac{2\pi d}{\lambda}\sin\theta})^\top$ for Uniform Linear Array (ULA),  where $\lambda$ is the wavelength, $d$ is the element spacing between the sensors and $\theta \in  \Theta = [-\pi/2,\pi/2)$ is the DOA in radians. 
For an arbitrary steering vector in the array manifold we use notation $\a$, dropping the dependency on $\theta$. 
The output of a beamformer is defined by 
$
y = \w^\hop \x ,
$
where $\w$ is the beamformer weight vector that depends on $\theta$ through  $\a$. Ideally, the beamformer weight $\w=\w(\theta)$  is chosen such that the beamformer  will null the interferences and noise while allowing the signal of interest (SOI) at $\theta$ to pass undistorted.  A beamformer estimates the locations of the $K$ signals impinging on the array as $K$ peaks in   
the array output power distribution  \cite{krim_viberg:1996,van_trees:2002,elbir2023twenty}:
\[
P(\theta)=  \E[ | \w^{\hop} \x|^2] = \w^\hop \M \w
\]
using a fine grid $\{\theta_m\}_{m=1}^M$ covering $\Theta$, i.e., allowing  $\theta$ (and hence $\a$ in the design of the beamformer weight $\w$) vary through the location space $\Theta$. 


As sparse DOA methods leverage the covariance matrix's geometry, they  often outperform traditional approaches in low sample size (LSS) scenarios. Diagonal loading of the SCM is another common method to  combat LSS settings \cite{baggeroer1999passive,mestre2005finite,abramovich2007diagonally,yang2018high,richmond2020capon},  but these methods often involve increased computational complexity arising from determining the optimal loading factor.  In this letter, we propose a greedy Capon beamformer (GCB)  that greedily  selects the  next high-power source (not yet detected)  using Capon beamforming principle  and  subsequently updates the interference-plus-noise  covariance matrix (INCM).  Although GCB may seem heuristic, it is yet logical and follows generic greedy selection algorithmic framework. The approach is computationally light; it does not require computing the eigenvalue decomposition as needed by MUSIC or R-MUSIC nor inverting the SCM as in SCB (i.e., $L>N$ is not required).  There are similar sparsity and covariance based DOA estimation methods, e.g, \cite{wipf2007beamforming,stoica2010spice,yardibi2010source,abeida2012iterative,gerstoft2016multisnapshot,mecklenbrauker2024robust,ollila2024sparse,wang2021sparse}. or review in \cite{yang2018sparse}, which are primarily iterative algorithms constructed using some optimization principles.  Our simulation studies illustrate that the proposed GCB performs favourably against SCB and  state-of-the-art (SOTA) methods.

\section{Capon beamformer} \label{sec:SCB}

Let the  $N \times N$ array covariance matrix $\M$ be decomposed as  
\beq \label{eq:M}
\M = \gamma \a \a^\hop + \Q
\eeq 
where $\gamma >0$ denotes the power of the SOI, $\a \in \mathbb{C}^N $ is the array
steering vector of the SOI at  $\theta$, and $\Q$ is  an $N \times N$ positive definite ($\Q \succ 0$) 
INCM.  It is assumed that the noise is spatially white and $K-1$ uncorrelated directional interfering signals are present. Then the  matrix $\Q$  can be expressed in the form 
\[
\Q = \sum_{k=2}^K \gamma_k \a_k \a_k^\hop  + \sigma^2 \mathbf{I}
\]
where $\sigma^2 \mathbf{I}$ is an $N \times N$ noise covariance matrix (i.e., we assume spatially white noise), $\a_k = \a(\theta_k)\in \mathbb{C}^N $ is the steering vector of the directional interference signal  at $\theta_k$, 
and $\gamma_k>0$ is the associated signal power ($k=2,\ldots,K$).  Without any loss of generality (w.l.o.g.) 
we will assume that all steering vectors are normalized such that $\| \a_k \|^2 = N$ holds.  

Capon Beamformer (CB)  minimizes the array output power subject to the
constraint that the SOI is passed undistorted:
\beq \label{eq:crit1}
\min_{\w} \w^\hop \M \w \ \ \mbox{ subject to}  \ \w^\hop \a = 1, \tag{P1}
\eeq 
or equivalently, maximizes the signal to interference plus noise ratio (SINR) at its output while enforcing a unit gain
towards the SOI: 
\beq \label{eq:crit2}
\max_{\w} \frac{ \gamma |\w^\hop \a |^2}{\w^\hop \Q \w}  \ \ \mbox{ subject to}  \ \w^\hop \a = 1   \tag{P2}.
\eeq 
Due to the first formulation \eqref{eq:crit1},  CB is also often called minimum variance distortionless response (MVDR) beamformer. 
The solutions to power minimization problem \eqref{eq:crit1} and SINR maximization \eqref{eq:crit2} are equivalent  \cite{du2010fully} and the optimum beamformer weight for both problems is given by
\begin{align} 
\w_{\text{opt}} &=  \frac{\M^{-1}\a}{\a^\hop \M^{-1} \a}  = \frac{\Q^{-1}\a}{\a^\hop \Q^{-1} \a} .   \label{eq:wopt}
\end{align} 
The optimal power and SINR are then
\begin{align} 
P_{\text{opt}} &= \mathbb{E}[ |\w_{\textup{opt}}^\hop \x |^2] = \w_{\text{opt}}^\hop \M \w_{\text{opt}} = \frac{1}{\a^\hop \M^{-1} \a},  \label{eq:Popt} \\ 
\mathrm{SINR}_{\text{opt}}  &= \frac{ \gamma |\w^\hop_{\text{opt}} \a |^2}{\w^\hop_{\text{opt}} \Q \w_{\text{opt}}}  = \gamma \a^\hop \Q^{-1} \a.
\end{align} 

So far we have discussed the ideal situation, so assuming that $\theta$ of SOI (and hence the steering vector $\a$) as well as the covariance matrix $\M$ are known exactly. 
Since $\M$  is
unavailable in practise, it is commonly estimated by the SCM $\S  =  \frac{1}{L} \sum_{l=1}^{\ndim} \x_{l} \x_l^\hop 
\in \C^{\pdim \times \pdim}$, where 
$\x_l  \in \C^{\pdim}$ represents the $l^{\text{th}}$ snapshot and $L$ denotes the number of snapshots (i.e., sample size).
Consider a grid $\{\theta\}_{m=1}^M$,   $\theta_1 < \ldots  < \theta_M$, of location parameters covering $\Theta$,  and  let  $\{\a_m\}_{m=1}^M$ denote the  corresponding steering vectors. Write 
 $\th=(\theta_1,\ldots,\theta_M)^\top$ for  the $M$-vector of DOA parameters on the grid.  Then SCB computes an estimate of the spatial power for SOI at $\theta_i$  using 
$\hat{P}_{\text{SCB},i} = ( \a^\hop_i  \hat{\M}^{-1} \a_i )^{-1}$,   $i = 1, \ldots, M$
which is an  empirical (sample based) estimate of \eqref{eq:Popt}.  
Typically, it is assumed that the grid resolution is fine enough
so that the true location parameters of the sources lie  in close proximity to the grid. 
SCB then identifies the  indices of the  largest $K$ peaks in the spatial spectrum:
\[
\mathcal K = \mathsf{findpeaks}_K(\hat{P}_{\text{SCB},1} , \ldots, \hat{P}_{\text{SCB},M}) 
\]
where $\mathcal K \subset \{1,\ldots,M \}$ with  $|\mathcal K | = K$ and outputs the vector of $K$  DOA estimates as $\hat{\th} = \th_{\mathcal K}$, 
where $\th_{\mathcal K}$  denotes a $K$-vector consisting of components  of an $M$-vector $\th$ corresponding to indices in 
the set  $\mathcal K$. 

\section{The Greedy Capon beamformer} 

 First we prove a result that relates the power $\gamma$ of the SOI to the optimum power and optimum SINR  of CB.

\begin{lemma}  \label{lem:CBF} The power $\gamma$ of the SOI at steering vector $\a$ is given by 
\beq \label{eq:lem:gamma}
\gamma =  \w_{\textup{opt}}^\hop (\M-\Q)  \w_{\textup{opt}} = P_{\textup{opt}} - (\a^\hop \Q^{-1} \a)^{-1}
\eeq 
where $P_{\textup{opt}}$ is the optimum beamformer power \eqref{eq:Popt}.
\end{lemma}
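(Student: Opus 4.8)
The plan is to evaluate the quadratic form $\w_{\textup{opt}}^\hop(\M-\Q)\w_{\textup{opt}}$ in two different ways, one yielding each of the two expressions on the right-hand side of \eqref{eq:lem:gamma}.

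For the first equality, observe that by the decomposition \eqref{eq:M} we have $\M-\Q=\gamma\,\a\a^\hop$, hence
\[
\w_{\textup{opt}}^\hop(\M-\Q)\w_{\textup{opt}}=\gamma\,\lvert\w_{\textup{opt}}^\hop\a\rvert^2 .
\]
It then suffices to note that $\w_{\textup{opt}}$ obeys the distortionless constraint of \eqref{eq:crit1}, i.e.\ $\w_{\textup{opt}}^\hop\a=1$, which is immediate from either form in \eqref{eq:wopt} since $\a^\hop\M^{-1}\a/(\a^\hop\M^{-1}\a)=1$. Therefore $\w_{\textup{opt}}^\hop(\M-\Q)\w_{\textup{opt}}=\gamma$.

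For the second equality, I split $\w_{\textup{opt}}^\hop(\M-\Q)\w_{\textup{opt}}=\w_{\textup{opt}}^\hop\M\w_{\textup{opt}}-\w_{\textup{opt}}^\hop\Q\w_{\textup{opt}}$. The first term equals $P_{\textup{opt}}$ by \eqref{eq:Popt}. For the second term I insert the $\Q^{-1}$-form $\w_{\textup{opt}}=\Q^{-1}\a/(\a^\hop\Q^{-1}\a)$ from \eqref{eq:wopt}; the middle factor $\Q$ cancels one $\Q^{-1}$, leaving
\[
\w_{\textup{opt}}^\hop\Q\w_{\textup{opt}}=\frac{\a^\hop\Q^{-1}\a}{(\a^\hop\Q^{-1}\a)^2}=(\a^\hop\Q^{-1}\a)^{-1},
\]
which gives the stated identity.

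The only step needing any care is the consistency of the two expressions for $\w_{\textup{opt}}$ in \eqref{eq:wopt} (equivalently, that $\M^{-1}\a$ and $\Q^{-1}\a$ are parallel). Since \eqref{eq:wopt} is already established in the excerpt, nothing more is required; but for a self-contained argument this follows directly from the Sherman--Morrison identity applied to $\M=\Q+\gamma\,\a\a^\hop$, which yields $\M^{-1}\a=\Q^{-1}\a/(1+\gamma\,\a^\hop\Q^{-1}\a)$, after which the normalizing denominators in $\w_{\textup{opt}}$ reconcile the two forms. I do not anticipate any genuine obstacle in this proof.
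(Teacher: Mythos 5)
Your proof is correct and follows essentially the same route as the paper: use $\M-\Q=\gamma\,\a\a^\hop$ together with the unit-gain constraint $\w_{\textup{opt}}^\hop\a=1$ for the first equality, then split the quadratic form and evaluate $\w_{\textup{opt}}^\hop\Q\w_{\textup{opt}}=(\a^\hop\Q^{-1}\a)^{-1}$ via the $\Q^{-1}$-form of $\w_{\textup{opt}}$. The extra Sherman--Morrison remark on the consistency of the two forms in \eqref{eq:wopt} is a nice touch but not needed, since the paper already asserts that equivalence.
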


\begin{proof} Due to unit gain constraint one has that  $\w_{\text{opt}}^\hop \a = 1$  and since $\gamma \a \a^\hop = \M-\Q$  from \eqref{eq:M},  we may write 
\begin{align} 
\gamma &= \gamma |  \w_{\text{opt}}^\hop \a |^2 =    \w_{\text{opt}}^\hop( \gamma \a  \a^{\hop} )  \w_{\text{opt}} = \w_{\text{opt}}^\hop (\M - \Q)  \w_{\text{opt}}   \notag   \\
&= \w_{\text{opt}}^\hop \M   \w_{\text{opt}}   -  \w_{\text{opt}}^\hop \Q   \w_{\text{opt}}  =  P_{\textup{opt}} -  (\a^\hop \Q^{-1}   \a)^{-1}
\end{align} 
where the last identity follows by recalling \eqref{eq:Popt} and  noting that 
$\w_{\text{opt}}^\hop \Q   \w_{\text{opt}} = (\a^\hop \Q^{-1}   \a )^{-1} = (\mathrm{SINR}_{\text{opt}}/\gamma)^{-1}  $, 
when  the latter form  of $\w_{\text{opt}}$ in \eqref{eq:wopt} is invoked. 
\end{proof} 

Lemma~1 also illustrates  $P_{\text{opt}}$ is essentially equal to the power $\gamma$ of the SOI in ideal conditions. Namely,  if the INCM $\Q$ contains only the white noise term,  i.e.,  $\Q = \sigma^2 \mathbf{I}$,  then $\gamma =   P_{\textup{opt}}  -  \frac{\sigma^2}{N}$ where we used that $\a^\hop \a = N$. If $\gamma \gg \sigma^2$ and $N$ is reasonably large, then we can notice that $\gamma \approx P_{\textup{opt}}$ which is an expected result. 

\autoref{lem:CBF} will now be used in our greedy Capon beamformer (GCB) algorithm to estimate the power of the detected signal. 
GCB greedily builds the INCM $\Q$ by identifying a high-power source on the grid using Capon's principle of maximizing the SINR while enforcing unit gain towards the SOI.  The {\bf GCB algorithm} proceeds as follows: 

{\bf Input:}  The SCM $\hat \M$, the number of sources $K$, the grid of DOAs $\th=(\theta,\ldots,\theta_M)^\top$ and the steering vectors $\a_1, \ldots , \a_M$.

{\bf Initialize}: Set $\hat{\Q} = [\tr(\S)/\pdim ] \cdot \mathbf{I}$ and $\mathcal M = \emptyset$. Initially, the set $\mathcal{M}$, which gathers the identified sources, is empty, and the INCM $\hat{\mathbf{Q}}$ is merely an estimate of the white noise matrix $\sigma^2 \mathbf{I}$. 

{\bf Main iteration}:   for $k=1,\ldots,K$, iterate the steps below
\begin{enumerate} 
\item Compute the beamformer output power estimates
\[
\hat P_i =  \w_i^\hop \S \w_i =  \dfrac{\a_i^\hop \hat{\Q}^{-1}   \S  \hat{\Q}^{-1} \a_i}{(\a_i^\hop \hat{\Q}^{-1} \a_i)^2},  \ i = 1,\ldots,M, 
\]
where $\w_i = \hat{\Q}^{-1} \a_i  / (\a_i^\hop \hat{\Q}^{-1} \a_i)$ signifies the beamformer weight for location $\theta_i$ on the grid. 

\item Identify the  indices of  the $k$ largest peaks  in the spatial spectrum:
\[
\mathcal K = \mathsf{findpeaks}_k( \hat P_1, \ldots, \hat P_M) , 
\]
i.e.,  $\mathcal K \subset \{1,\ldots,M \}$ with cardinality $|\mathcal K | = k$. 

\item If $k<K$, then execute steps 4) - 7).

\item Choose the index that has the least coherence with steering vectors that have been  picked up so far:
\[
i_k = \arg \min_{i \in  \mathcal K}  (\max_{j \in \mathcal M} | \a_i^\hop \a_j | ).
\]
Thus, $i_k$ is the index from the set $\mathcal{K}$ corresponding to the steering vector $\mathbf{a}_{i_k}$, which is the least coherent with the previously selected steering vectors $\mathbf{a}_j$ for $j \in \mathcal{M}$. 

\item Update the  set $\mathcal M   \gets \mathcal M  \cup \{ i_k \}$ of chosen indices. 

\item Estimate the signal power of the chosen source as $\hat{\gamma}_k = \hat P_{i_k} - (\a^\hop_{i_k} \hat{\Q}^{-1} \a_{i_k})^{-1}$, i.e., using equation \eqref{eq:lem:gamma} of Lemma~\ref{lem:CBF}. 

\item Update the INCM as  $\hat{\Q}  \gets \hat{\Q} + \hat{\gamma}_{k} \a_{i_k} \a_{i_k}^\hop$.  Such rank-1 update allows to compute 
the inverse covariance matrix $ \hat{\Q}^{-1}$ efficiently using the Sherman-Morrison \cite{sherman1950adjustment} formula, 
yielding
\beq 
\hat{\Q}^{-1} \gets  \hat{\Q}^{-1}   -  \dfrac{ \hat \gamma_k  \hat{\Q}^{-1}   \a_{i_k}  \a_{i_k}^{\hop} \hat{\Q}^{-1}  }{ 1+ \hat \gamma_{k}   \a_{i_k}^\hop \hat{\Q}^{-1} \a_{i_k}}. 
\eeq
\end{enumerate} 

{\bf Output}: After $K$ iterations,  the DOA estimates are given as $\hat{\th} = \th_{\mathcal K}$ similarly as  in the SCB algorithm outlined in \autoref{sec:SCB}. 

MATLAB and python functions, and simulation implementations, are available at \url{https://github.com/esollila/GCB}.

The dominant complexity of GCB algorithm is in the matrix-vector 
multiplications of step 1, whose complexity is $O(N^2)$. This is repeated for all $M$ steering vectors, so total  complexity is $O(MN^2)$. 
The complexity of step 4 is $O(k K N)$, where $k < K$ 
while the complexity of the  peak finding algorithm
in step 2 is $O(M)$. The complexity of rank-one update in step~7 is only $O(N^2)$ instead of $O(N^3)$ due to inverting an $N \times N$ matrix. Hence the overall complexity of GCB algorithm is $O(KMN^2)$.  The  complexity of SCB and MUSIC  is $O(MN^2 + N^3)$ which is typically slightly better than that of GCB since their complexity reduces to $O(MN^2)$ when   $M \gg N$.  However, both  the SCB and MUSIC need to assume $L>N$ which is not needed by GCB.  The computational complexity of IAA(-APES) \cite[Table~2]{yardibi2010source}, a widely used sparse DOA estimation method, is $O(N_{iter}( 2 MN^2 + N^3))$, where $N_{iter}$ denotes the number of iterations required for convergence. This complexity is generally higher than that of GCB, but similar magnitude when $N_{iter}< N \ll M$.  
 
Average running times of different methods  on a MacBook Pro M3 laptop in the simulation set-up of Section~IV  ($K=4$, $N=20$, $M=1801$, $L=125$) are compared in \autoref{tab:running_times}. 
Running times are averages over 100 MC simulations and over different SNR levels of Figure~1. 
We compared with the following SOTA methods: 
IAA, SPICE+ \cite[eq. (44)-(46)]{stoica2010spice}, SBL1 \cite[Table~1]{gerstoft2016multisnapshot}  and CGDP(-SBL)1 \cite[Table~I]{wang2021sparse}. For the last two methods, we used the efficient implementations available at the authors github pages \!\cite{GDP-SBL,SBL} while for IAA and SPICE+, we used our own implementation. 
As can be noted, CGBD1 and SBL1  are  3 orders (while SPICE+  2 orders) of  magnitude  slower than GCB, SCB, MUSIC,  R-MUSIC, and IAA, i.e., taking seconds instead of milliseconds to compute. 
\vspace{-3pt}

\begin{table}[!h]
 \centerline{
%
%
\definecolor{mycolor1}{rgb}{0.00000,0.44700,0.74100}%
\begin{tikzpicture}

\begin{axis}[%
width= 0.9\columnwidth, 
height= 2.6cm, 
 y tick label style={font=\scriptsize} , 
 x tick label style={font=\notsotiny} , 
scale only axis,
bar shift auto,
xmin=-0.2,
xmax=9.0,
xtick={1, 2, 3, 4, 5, 6, 7, 8},
ymin=0,
ymax=9.9,
ylabel style={font=\footnotesize\color{white!15!black}},
ylabel={Running time [s]},
xticklabels = {GCB, SCB,MUSIC, R-MUSIC, IAA, SPICE+, SBL1, CGDP1},
axis background/.style={fill=white},
axis x line*=bottom,
axis y line*=left,
legend style={legend cell align=left, align=left, draw=white!15!black}
]
\addplot[ybar, bar width=0.8, fill=mycolor1, draw=black, area legend] table[row sep=crcr] {%
1	0.002962515735\\
2	0.00122037690083333\\
3	0.00153122958333333\\
4	0.0014574058825\\
5	0.00899482269166667\\
6	0.186013558778333\\
7	3.71684739413333\\
8	8.90318417679833\\
};
\addplot[forget plot, color=white!15!black] table[row sep=crcr] {%
-0.2	0\\
9.2	0\\
};

\node[above, align=center]
at (axis cs:1,0.003) {\tiny $3.0e^{-3}$};
\node[above, align=center]
at (axis cs:2,0.001) {\tiny $1.2e^{-3}$};
\node[above, align=center]
at (axis cs:3,0.002) {\tiny $1.5e^{-3}$};
\node[above, align=center]
at (axis cs:4,0.001) {\tiny $1.5e^{-3}$};
\node[above, align=center]
at (axis cs:5,0.009) {\tiny $9.0e^{-3}$};
\node[above, align=center]
at (axis cs:6,0.186) {\tiny $1.9e^{-1}$};
\node[above, align=center]
at (axis cs:7,3.717) {\tiny $3.7$};
\node[above, align=center]
at (axis cs:8,8.903) {\tiny $8.9$};
\end{axis}
\end{tikzpicture}
 \caption{Average running times of methods on a laptop.} \label{tab:running_times}
\end{table}
 

Another advantage of GCB over SCB is its robustness to coherent sources. 
While coherent sources make the SCM poorly conditioned, the GCB estimate of the INCM $\mathbf{Q}$ remains well-conditioned, eliminating the need for mitigation techniques like spatial smoothing.

 \section{Simulation study}   \label{sec:conclusion}

 The array is ULA with half a wavelength inter-element spacing and number of sensors is $N=20$. 
The grid size is $M=1801$, and spacing is uniform, thus providing angular resolution  $\Delta \theta=0.1^o$. The number of Monte-Carlo (MC) trials is $15000$ in order to obtain accurate finite sample performance and smooth curves.  The SNR  of $k^{\text{th}}$ source is defined as $\mathrm{SNR}_k = \gamma_k/\sigma^2$, where 
 $\gamma_k $ is the signal power of the  $k^{\text{th}}$ source and $\sigma^2>0$ is the noise variance. The array SNR is defined as average 
$\mbox{SNR (dB)} = \frac{10}{K}  \sum_{k=1}^K  \log_{10} \mbox{SNR}_k$.   The $K$ sources are following complex circular Gaussian distribution, $s_k \sim \mathcal C \mathcal N(0, \gamma_k)$,  $k=1, \ldots, K$,  unless otherwise noted.

\begin{figure}[!t]
\centerline{
%
%
\definecolor{mycolor1}{rgb}{1.00000,0.00000,1.00000}%
\definecolor{mycolor2}{rgb}{0.34200,0.87200,0.50300}%
\definecolor{mycolor3}{rgb}{0.92900,0.69400,0.12500}%
\begin{tikzpicture}

\begin{axis}[%
width= 0.44\columnwidth, 
height= 4.3cm, 
 y tick label style={font=\scriptsize} , 
 x tick label style={font=\scriptsize} , 
 scale only axis,
xmin=-14,
xmax=-8.5,
xlabel style={font=\scriptsize\color{white!15!black}},
xlabel={SNR (dB)},
ymode=log,
ymin=0.326413432578675,
ymax=33.863003001309,
yminorticks=true,
ylabel style={font=\color{white!15!black}},
axis background/.style={fill=white},
title style={font=\bfseries},
xmajorgrids,
ymajorgrids,
yminorgrids,
legend style={legend cell align=left, align=left, draw=white!15!black,font=\scriptsize,at={(1.0,1.0)},opacity=0.92}
]
\addplot [color=blue, line width=0.6pt, mark size=1.9pt, mark=o, mark options={solid, blue}]
  table[row sep=crcr]{%
-8.5	0.344162171076367\\
-9	0.369009846300428\\
-9.5	0.396364983317143\\
-10	0.427045665005511\\
-10.5	0.461473581764618\\
-11	0.500158774790565\\
-11.5	0.837541003971348\\
-12	1.39512231243955\\
-12.5	2.59566803219005\\
-13	4.51980639703369\\
-13.5	7.31238747149883\\
-14	10.0194951103669\\
};
\addlegendentry{GCB}

\addplot [color=mycolor1, line width=0.6pt, mark size=2.6pt, mark=+, mark options={solid, mycolor1}]
  table[row sep=crcr]{%
-8.5	0.361010249161986\\
-9	0.389193525125996\\
-9.5	0.420756144736274\\
-10	0.45540677787373\\
-10.5	0.643498562546957\\
-11	0.868824953601127\\
-11.5	1.42383697568694\\
-12	2.7060441730812\\
-12.5	4.63283429446813\\
-13	6.48838665103531\\
-13.5	9.36649041352556\\
-14	12.4496619713147\\
};
\addlegendentry{SCB}

\addplot [color=mycolor2, line width=0.6pt, mark size=2.6pt, mark=diamond, mark options={solid, mycolor2}]
  table[row sep=crcr]{%
-8.5	0.350171005843335\\
-9	0.380510709441928\\
-9.5	0.550489842473651\\
-10	0.89357849869686\\
-10.5	1.35000503702764\\
-11	2.68636348992463\\
-11.5	4.93774262863777\\
-12	6.88125051619737\\
-12.5	9.39651780182423\\
-13	12.5685403130196\\
-13.5	15.6392418145297\\
-14	18.6218942215876\\
};
\addlegendentry{MUSIC}

\addplot [color=mycolor3, dashed, line width=0.6pt, mark size=2.3pt, mark=triangle, mark options={solid, rotate=180, mycolor3}]
  table[row sep=crcr]{%
-8.5	0.344328612519636\\
-9	0.374972791061185\\
-9.5	0.544668284603903\\
-10	0.902618512943028\\
-10.5	1.90158502621501\\
-11	3.6042188485013\\
-11.5	6.02269230744359\\
-12	8.82474484569228\\
-12.5	12.1344705581284\\
-13	15.4661209020972\\
-13.5	18.7691133244484\\
-14	22.199598644369\\
};
\addlegendentry{R-MUSIC}

\addplot [color=red, dashed, line width=0.6pt, mark size=1.8pt, mark=square, mark options={solid, red}]
  table[row sep=crcr]{%
-8.5	0.349651826822053\\
-9	0.37569313719222\\
-9.5	0.403851457840626\\
-10	0.436058405873952\\
-10.5	1.67405559445717\\
-11	3.44953430673379\\
-11.5	6.59052710082179\\
-12	10.5142565436966\\
-12.5	15.1757593417924\\
-13	21.2603635340509\\
-13.5	27.6644924117541\\
-14	33.863003001309\\
};
\addlegendentry{IAA}

\addplot [color=black, line width=0.8pt]
  table[row sep=crcr]{%
-8.5	0.326413432578675\\
-9	0.352447368950927\\
-9.5	0.381121974064003\\
-10	0.412760088446054\\
-10.5	0.447726141825284\\
-11	0.486431236697659\\
-11.5	0.529338812425413\\
-12	0.576970961121026\\
-12.5	0.629915476887484\\
-13	0.68883373140509\\
-13.5	0.754469481405001\\
-14	0.827658727316119\\
};

\end{axis}
\end{tikzpicture}
%
%
\definecolor{mycolor1}{rgb}{1.00000,0.00000,1.00000}%
\definecolor{mycolor2}{rgb}{0.34200,0.87200,0.50300}%
\definecolor{mycolor3}{rgb}{0.92900,0.69400,0.12500}%
\begin{tikzpicture}

\begin{axis}[%
width= 0.44\columnwidth, 
height= 4.3cm, 
 y tick label style={font=\scriptsize} , 
 x tick label style={font=\scriptsize} , 
scale only axis,
xmin=-8.75,
xmax=-2.75,
xlabel style={font=\scriptsize\color{white!15!black}},
xlabel={SNR (dB)},
ymode=log,
ymin=0.329082705762752,
ymax=28.0996181255191,
yminorticks=true,
ylabel style={font=\small\color{white!15!black}},
axis background/.style={fill=white},
title style={font=\bfseries},
xmajorgrids,
ymajorgrids,
yminorgrids,
legend style={legend columns = 6, legend cell align=left, align=left, draw=white!15!black,font=\small,at={(0.98,1.21)}}
]
\addplot [color=blue, line width=0.6pt, mark size=1.9pt, mark=o, mark options={solid, blue}]
  table[row sep=crcr]{%
-2.75	0.364008058518854\\
-3.25	0.384885610712237\\
-3.75	0.407439811505943\\
-4.25	0.432016512030114\\
-4.75	0.459518153431758\\
-5.25	0.504447420451328\\
-5.75	0.720819071149851\\
-6.25	0.942067938102132\\
-6.75	1.53198128796232\\
-7.25	2.10313905072078\\
-7.75	3.06781570502533\\
-8.25	4.67824531208016\\
-8.75	6.80763245188809\\
};

\addplot [color=mycolor1, line width=0.6pt, mark size=2.2pt, mark=+, mark options={solid, mycolor1}]
  table[row sep=crcr]{%
-2.75	3.50508953761052\\
-3.25	4.64660044907385\\
-3.75	5.91153752137856\\
-4.25	7.21260918114934\\
-4.75	9.00809816405955\\
-5.25	10.7225997096475\\
-5.75	12.7849451726109\\
-6.25	15.2768718089361\\
-6.75	17.6101916930698\\
-7.25	20.0710751746553\\
-7.75	22.5566889089098\\
-8.25	25.2882400468413\\
-8.75	28.0996181255191\\
};

\addplot [color=mycolor2, line width=0.6pt, mark size=2.4pt, mark=diamond, mark options={solid, mycolor2}]
  table[row sep=crcr]{%
-2.75	0.372139937837009\\
-3.25	0.398155246103828\\
-3.75	0.462633476235057\\
-4.25	0.972223842538334\\
-4.75	1.09481426126383\\
-5.25	1.73298193874028\\
-5.75	2.57277214951759\\
-6.25	3.89567727958396\\
-6.75	5.0098551143388\\
-7.25	6.79220195027601\\
-7.75	8.68819672122279\\
-8.25	10.5697951856536\\
-8.75	12.7805518712352\\
};

\addplot [color=mycolor3, dashed, line width=0.6pt, mark size=2.4pt, mark=triangle, mark options={solid, rotate=180, mycolor3}]
  table[row sep=crcr]{%
-2.75	0.367300730513215\\
-3.25	0.393590155429302\\
-3.75	0.812342792934154\\
-4.25	1.22215871848487\\
-4.75	1.73398575481083\\
-5.25	2.52918235514634\\
-5.75	3.6507763814799\\
-6.25	5.28461837870845\\
-6.75	6.91986782702758\\
-7.25	9.31804977245866\\
-7.75	11.5785659797001\\
-8.25	13.8385784709022\\
-8.75	16.8593950623336\\
};

\addplot [color=red, dashed, line width=0.6pt, mark size=1.8pt, mark=square, mark options={solid, red}]
  table[row sep=crcr]{%
-2.75	0.366832023320392\\
-3.25	0.389675420489069\\
-3.75	0.415272280156846\\
-4.25	0.441981900081892\\
-4.75	0.674458894225587\\
-5.25	0.708186604410637\\
-5.75	1.51054952473154\\
-6.25	1.9159834724409\\
-6.75	2.77553922688908\\
-7.25	4.48916875750214\\
-7.75	6.18720439832616\\
-8.25	8.84061522746012\\
-8.75	11.6051059624632\\
};

\addplot [color=black, line width=0.8pt]
  table[row sep=crcr]{%
-2.75	0.329082705762752\\
-3.25	0.350964326140912\\
-3.75	0.374570075178436\\
-4.25	0.400076240589812\\
-4.75	0.427681673426011\\
-5.25	0.45761091462675\\
-5.75	0.490117726801828\\
-6.25	0.525489074035939\\
-6.75	0.564049595870257\\
-7.25	0.606166625474526\\
-7.75	0.652255806601878\\
-8.25	0.70278736947276\\
-8.75	0.758293132549827\\
};

\end{axis}
\end{tikzpicture}
\caption{RMSE  of $\hat{\boldsymbol{\theta}}$ (deg.) versus SNR when $L=125$ (left panel) and  $L=25$ (right panel); $K=4$, $N=20$,  $M=1801$.}    \label{fig:4source}
\centerline{
%
%
\definecolor{mycolor1}{rgb}{1.00000,0.00000,1.00000}%
\definecolor{mycolor2}{rgb}{0.34200,0.87200,0.50300}%
\definecolor{mycolor3}{rgb}{0.92900,0.69400,0.12500}%
\begin{tikzpicture}

\begin{axis}[%
width= 0.43\columnwidth, 
height= 4.3cm, 
 y tick label style={font=\scriptsize} , 
 x tick label style={font=\scriptsize} , 
 at={(0\columnwidth,0\columnwidth)},
scale only axis,
xmin=-14,
xmax=-8.5,
xlabel style={font=\scriptsize\color{white!15!black}},
xlabel={SNR (dB)},
ymin=0,
ymax=25.6133333333333,
ylabel style={font=\color{white!15!black}},
title style={font=\small},
title={a) percentage of misdetection},
axis background/.style={fill=white},
legend style={legend cell align=left, align=left, draw=white!15!black}
]
\addplot [color=blue, mark=o, line width=0.6pt, mark options={solid, blue}]
  table[row sep=crcr]{%
-8.5	0\\
-9	0\\
-9.5	0\\
-10	0\\
-10.5	0\\
-11	0\\
-11.5	0.02\\
-12	0.106666666666667\\
-12.5	0.393333333333333\\
-13	1.12\\
-13.5	2.62666666666667\\
-14	4.94\\
};

\addplot [color=mycolor1, mark=+, line width=0.6pt, mark size=2.5pt, mark options={solid, mycolor1}]
  table[row sep=crcr]{%
-8.5	0\\
-9	0\\
-9.5	0\\
-10	0\\
-10.5	0.00666666666666667\\
-11	0.02\\
-11.5	0.1\\
-12	0.346666666666667\\
-12.5	0.986666666666667\\
-13	2.04\\
-13.5	4.16\\
-14	7.26\\
};

\addplot [color=mycolor2, mark=diamond,   line width=0.6pt, mark size=2.3pt, mark options={solid, mycolor2}]
  table[row sep=crcr]{%
-8.5	0\\
-9	0\\
-9.5	0.00666666666666667\\
-10	0.02\\
-10.5	0.1\\
-11	0.386666666666667\\
-11.5	1.10666666666667\\
-12	2.35333333333333\\
-12.5	4.46666666666667\\
-13	7.61333333333333\\
-13.5	11.7066666666667\\
-14	16.7533333333333\\
};

\addplot [color=mycolor3, dashed, mark=triangle,  line width=0.6pt, mark size=2.3pt, mark options={solid, rotate=180, mycolor3}]
  table[row sep=crcr]{%
-8.5	0\\
-9	0\\
-9.5	0.00666666666666667\\
-10	0.0333333333333333\\
-10.5	0.2\\
-11	0.746666666666667\\
-11.5	2.02666666666667\\
-12	4.31333333333333\\
-12.5	7.82666666666667\\
-13	12.9133333333333\\
-13.5	18.6866666666667\\
-14	25.6133333333333\\
};

\addplot [color=red, dashed, mark=square,  line width=0.6pt,mark size=1.8pt, mark options={solid, red}]
  table[row sep=crcr]{%
-8.5	0\\
-9	0\\
-9.5	0\\
-10	0\\
-10.5	0.0466666666666667\\
-11	0.213333333333333\\
-11.5	0.78\\
-12	2.02\\
-12.5	4.23333333333333\\
-13	8.50666666666667\\
-13.5	14.6\\
-14	22.0733333333333\\
};

\end{axis}

\begin{axis}[%
width= 0.26\columnwidth, 
height= 2.5cm, 
 y tick label style={font=\tiny} , 
 x tick label style={font=\tiny} , 
scale only axis,
at={(0.15\columnwidth,0.19\columnwidth)},
xtick = {-10,-11},
scale only axis,
xmin=-11.75,
xmax=-9.75,
ymin=0,
ymax=2.98,
axis background/.style={fill=white},
]
\addplot [color=blue, mark=o,  line width=0.6pt,mark options={solid, blue}]
  table[row sep=crcr]{%
-9.75	0\\
-10.25	0\\
-10.75	0\\
-11.25	0.0133333333333333\\
-11.75	0.0666666666666667\\
};

\addplot [color=mycolor1, mark=+, line width=0.6pt,mark size=2.5pt, mark options={solid, mycolor1}]
  table[row sep=crcr]{%
-9.75	0\\
-10.25	0.00666666666666667\\
-10.75	0.0133333333333333\\
-11.25	0.0533333333333333\\
-11.75	0.226666666666667\\
};

\addplot [color=mycolor2, mark=diamond,   line width=0.6pt,mark size=2.3pt, mark options={solid, mycolor2}]
  table[row sep=crcr]{%
-9.75	0.00666666666666667\\
-10.25	0.06\\
-10.75	0.2\\
-11.25	0.666666666666667\\
-11.75	1.58666666666667\\
};

\addplot [color=mycolor3, dashed, mark=triangle,   line width=0.6pt,mark size=2.3pt, mark options={solid, rotate=180, mycolor3}]
  table[row sep=crcr]{%
-9.75	0.0133333333333333\\
-10.25	0.1\\
-10.75	0.413333333333333\\
-11.25	1.29333333333333\\
-11.75	2.98\\
};

\addplot [color=red, dashed, mark=square,  mark size=1.8pt,  line width=0.6pt, mark options={solid, red}]
  table[row sep=crcr]{%
-9.75	0\\
-10.25	0.00666666666666667\\
-10.75	0.0933333333333333\\
-11.25	0.46\\
-11.75	1.28\\
};

\end{axis}
\end{tikzpicture}
%
%
\definecolor{mycolor1}{rgb}{1.00000,0.00000,1.00000}%
\definecolor{mycolor2}{rgb}{0.34200,0.87200,0.50300}%
\definecolor{mycolor3}{rgb}{0.92900,0.69400,0.12500}%
\begin{tikzpicture}

\begin{axis}[%
width= 0.43\columnwidth, 
height= 4.3cm, 
 y tick label style={font=\scriptsize} , 
 x tick label style={font=\scriptsize} , 
scale only axis,
xmin=-14,
xmax=-8.5,
xlabel style={font=\scriptsize\color{white!15!black}},
xlabel={SNR (dB)},
ymode=log,
ymin=0.123503866390242,
ymax=0.302279120460986,
yminorticks=true,
ylabel style={font=\color{white!15!black}},
axis background/.style={fill=white},
title style={font=\small},
title={b) RMSE of $\hat \theta_1$},
xmajorgrids,
ymajorgrids,
yminorgrids,
legend style={legend cell align=left, align=left, draw=white!15!black,font=\scriptsize,at={(1.0,1.0)}}
]
\addplot [color=blue, mark=o, mark options={solid, blue}] 
  table[row sep=crcr]{%
-8.5	0.130033329060924\\
-9	0.137983090751488\\
-9.5	0.146733318188702\\
-10	0.156356430418878\\
-10.5	0.166861219780592\\
-11	0.177685114739531\\
-11.5	0.190537835262887\\
-12	0.204921123036809\\
-12.5	0.220902693510062\\
-13	0.23817360615036\\
-13.5	0.257485274659866\\
-14	0.278579731256004\\
};
\addlegendentry{GCB}

\addplot [color=mycolor1,  mark=+, mark size=2.5pt,mark options={solid, mycolor1}] 
  table[row sep=crcr]{%
-8.5	0.136313364470742\\
-9	0.145391425698583\\
-9.5	0.155344777833051\\
-10	0.166184636273434\\
-10.5	0.177767638599756\\
-11	0.190917084969715\\
-11.5	0.205216958363582\\
-12	0.220738155590132\\
-12.5	0.2383736562626\\
-13	0.257520225742886\\
-13.5	0.278749588938412\\
-14	0.302072838898172\\
};
\addlegendentry{SCB}

\addplot [color=mycolor2, mark=diamond,  mark size=2.5pt,mark options={solid, mycolor2}]  
  table[row sep=crcr]{%
-8.5	0.131242777579059\\
-9	0.140147541303204\\
-9.5	0.149906637611548\\
-10	0.160812520242258\\
-10.5	0.17313000895281\\
-11	0.186586887713651\\
-11.5	0.201950488981829\\
-12	0.217695506001693\\
-12.5	0.235854474906314\\
-13	0.254987581396951\\
-13.5	0.277875271779745\\
-14	0.302279120460986\\
};
\addlegendentry{MUSIC}

\addplot [color=mycolor3, dashed,  mark=triangle,  mark size=2.5pt, mark options={solid, rotate=180, mycolor3}] 
  table[row sep=crcr]{%
-8.5	0.127205365029501\\
-9	0.136514025085071\\
-9.5	0.146695543295587\\
-10	0.157856396173823\\
-10.5	0.170116064357772\\
-11	0.183594369698174\\
-11.5	0.198432683177619\\
-12	0.214839647229154\\
-12.5	0.232876007083811\\
-13	0.252957926939259\\
-13.5	0.275493704863958\\
-14	0.30097882476559\\
};
\addlegendentry{R-MUSIC}

\addplot [color=red, dashed, mark=square,  mark options={solid, red}] 
  table[row sep=crcr]{%
-8.5	0.13156493960525\\
-9	0.139678201592088\\
-9.5	0.148375649394816\\
-10	0.158419274921541\\
-10.5	0.169302490629446\\
-11	0.180796386394567\\
-11.5	0.193917508234817\\
-12	0.208513788512894\\
-12.5	0.224068441924932\\
-13	0.241464973305308\\
-13.5	0.261143638635905\\
-14	0.28303238919483\\
};
\addlegendentry{IAA}

\addplot [color=black, line width=0.8pt]
  table[row sep=crcr]{%
-8.5	0.123503866390242\\
-9	0.132307570848616\\
-9.5	0.141889616302339\\
-10	0.152338192050643\\
-10.5	0.163753064562444\\
-11	0.176247107457715\\
-11.5	0.189948014139885\\
-12	0.205000211678824\\
-12.5	0.221566996525787\\
-13	0.239832915043904\\
-13.5	0.260006414762421\\
-14	0.2823227958124\\
};
\addlegendentry{CRLB}

\end{axis}
\end{tikzpicture}
\caption{a) percentage of misdetection vs SNR; b) RMSE  of $\hat \theta_1$ (deg.) versus SNR; $K=4$, $L=125$, $N=20$,  $M=1801$.} \label{fig:source1}
\end{figure}

We compare the proposed  GCB method against the Cram\'er-Rao lower bound (CRLB) \cite[Ch. 8.4.2]{van_trees:2002}, SCB,  MUSIC  R-MUSIC, and IAA which were chosen as they share similar computational complexity ({\it cf.} \autoref{tab:running_times}). All methods, except R-MUSIC,  use a grid of DOAs that sets the angular resolution limit of the technique.  In our set-up we have $K=4$ sources at DOAs  $\theta_1 = -30.1^o$, $\theta_2 = -20.02^o$, $\theta_3=-10.02^o$ and $\theta_4=3.02^o$, i.e.,  all except the  $1^{\text{st}}$  source is off the predefined grid.  The SNR of the last 3 sources are $-1$, $-2$ and $-5$ dB  relative to the $1^{\text{st}}$ source.  
The DOA estimates $\hat \theta_1, \ldots, \hat \theta_4$ are ordered by finding the nearest neighbor 
first for  the largest magnitude source $\theta_1$, then for $\theta_2$,  followed by $\theta_3$ and the remaining DOA estimate is paired with smallest magnitude source $\theta_4$. 
We then calculated the empirical root mean squared error (RMSE) $\| \hat{\boldsymbol{\theta}}  - \boldsymbol{\theta}  \|$ averaged over all MC trials. We also calculated RMSE for each source $| \hat \theta_i  - \theta_i |$. 

\begin{figure}[!t]
\centerline{
%
%
\definecolor{mycolor1}{rgb}{1.00000,0.00000,1.00000}%
\definecolor{mycolor2}{rgb}{0.34200,0.87200,0.50300}%
\definecolor{mycolor3}{rgb}{0.92900,0.69400,0.12500}%
\begin{tikzpicture}

\begin{axis}[%
width= 0.44\columnwidth, 
height= 4.3cm, 
 y tick label style={font=\scriptsize} , 
 x tick label style={font=\scriptsize} , 
scale only axis,
xmin=25,
xmax=85,
xlabel style={font=\scriptsize\color{white!15!black}},
xlabel={sample size, $L$},
ymode=log,
ymin=0.317064891127055,
ymax=18.7385284765551,
yminorticks=true,
ylabel style={font=\color{white!15!black}},
axis background/.style={fill=white},
title style={font=\bfseries},
xmajorgrids,
ymajorgrids,
yminorgrids,
legend style={legend cell align=left, align=left, draw=white!15!black,,font=\footnotesize,at={(1.0,1.0)},opacity=0.9}
]
\addplot [color=blue, line width=0.6pt, mark size=2.1pt, mark=o, mark options={solid, blue}]
  table[row sep=crcr]{%
25	1.74034609584799\\
35	0.686979766805399\\
45	0.458462357596927\\
55	0.414703106651814\\
65	0.38121385074522\\
75	0.355189714565796\\
85	0.336948462923731\\
};
\addlegendentry{GCB}

\addplot [color=mycolor1, line width=0.6pt, mark size=2.9pt, mark=+, mark options={solid, mycolor1}]
  table[row sep=crcr]{%
25	18.7385284765551\\
35	4.80801168883771\\
45	1.28208725652092\\
55	0.507837506820177\\
65	0.435939598262572\\
75	0.394696001162075\\
85	0.365047028933715\\
};
\addlegendentry{SCB}

\addplot [color=mycolor2, line width=0.6pt, mark size=2.7pt, mark=diamond, mark options={solid, mycolor2}]
  table[row sep=crcr]{%
25	5.85169180095237\\
35	2.63583398566754\\
45	0.894170378991984\\
55	0.616158312989987\\
65	0.393577099604808\\
75	0.363867375106186\\
85	0.340899203088928\\
};
\addlegendentry{MUSIC}

\addplot [color=mycolor3, dashed, line width=0.6pt, mark size=2.7pt, mark=triangle, mark options={solid, rotate=180, mycolor3}]
  table[row sep=crcr]{%
25	8.18952250969196\\
35	3.43519670069888\\
45	1.35468802719532\\
55	0.724342328250883\\
65	0.388785439244359\\
75	0.359014894157538\\
85	0.335329579187877\\
};
\addlegendentry{R-MUSIC}

\addplot [color=red, dashed, line width=0.6pt, mark size=1.9pt, mark=square, mark options={solid, red}]
  table[row sep=crcr]{%
25	3.61499956662054\\
35	1.17378544319934\\
45	0.471397496811342\\
55	0.424519493074228\\
65	0.388200978875632\\
75	0.361896854550205\\
85	0.341782386907224\\
};
\addlegendentry{IAA}

\addplot [color=black, line width=1.0pt]
  table[row sep=crcr]{%
25	0.584638771918522\\
35	0.494109945566194\\
45	0.43576401209727\\
55	0.394163379618146\\
65	0.362577574527373\\
75	0.337541352345851\\
85	0.317064891127055\\
};
\addlegendentry{CRLB}

\end{axis}
\end{tikzpicture}
%
%
\definecolor{mycolor1}{rgb}{1.00000,0.00000,1.00000}%
\definecolor{mycolor2}{rgb}{0.34200,0.87200,0.50300}%
\definecolor{mycolor3}{rgb}{0.92900,0.69400,0.12500}%
\begin{tikzpicture}

\begin{axis}[%
width= 0.44\columnwidth, 
height= 4.3cm, 
 y tick label style={font=\scriptsize} , 
 x tick label style={font=\scriptsize} , 
scale only axis,
xmin=25,
xmax=150,
xlabel style={font=\scriptsize\color{white!15!black}},
xlabel={sample size, $L$},
ymode=log,
ymin=0.321738957179615,
ymax = 20, 
yminorticks=true,
ylabel style={font=\color{white!15!black}},
axis background/.style={fill=white},
title style={font=\bfseries},
xmajorgrids,
ymajorgrids,
yminorgrids,
legend style={legend cell align=left, align=left, draw=white!15!black}
]
\addplot [color=blue, line width=0.6pt, mark size=2.0pt, mark=o, mark options={solid, blue}]
  table[row sep=crcr]{%
25	7.60998965395703\\
35	3.5021074607537\\
45	1.71457769338886\\
55	0.866275014068857\\
65	0.556567456229102\\
75	0.469866363980227\\
85	0.443007750120318\\
95	0.42378500838672\\
125	0.369009846300428\\
150	0.339506897328069\\
};

\addplot [color=mycolor1, line width=0.6pt, mark size=2.9pt, mark=+, mark options={solid, mycolor1}]
  table[row sep=crcr]{%
25	29.4246326626745\\
35	13.404831104245\\
45	6.01311180671039\\
55	2.81540389050429\\
65	1.62766614103344\\
75	0.849877559808864\\
85	0.492470777474833\\
95	0.461415503279491\\
125	0.389193525125996\\
150	0.350045139946263\\
};

\addplot [color=mycolor2, line width=0.6pt, mark size=2.4pt, mark=diamond, mark options={solid, mycolor2}]
  table[row sep=crcr]{%
25	14.0141427184588\\
35	9.18130323356474\\
45	6.08283848215616\\
55	3.69556889621431\\
65	2.4061593463443\\
75	1.78782195235805\\
85	0.936546136966389\\
95	0.6767095881297\\
125	0.380510709441928\\
150	0.344353113339587\\
};

\addplot [color=mycolor3, dashed, line width=0.6pt, mark size=2.9pt, mark=triangle, mark options={solid, rotate=180, mycolor3}]
  table[row sep=crcr]{%
25	18.1895501221025\\
35	11.6161288505679\\
45	7.93823390311093\\
55	4.75251706436286\\
65	3.28130283815943\\
75	2.321578606418\\
85	0.974185666520996\\
95	0.833178870913711\\
125	0.374972791061185\\
150	0.338776713703351\\
};

\addplot [color=red, dashed, line width=0.6pt, mark size=1.8pt, mark=square, mark options={solid, red}]
  table[row sep=crcr]{%
25	13.06351183003\\
35	7.71450341456487\\
45	4.21570449628529\\
55	2.8259849964216\\
65	2.11816042829622\\
75	0.768308358234725\\
85	0.990896900119619\\
95	0.432532233866255\\
125	0.37569313719222\\
150	0.34378423465889\\
};

\addplot [color=black, line width=1.0pt]
  table[row sep=crcr]{%
25	0.788096275465223\\
35	0.66606292034513\\
45	0.587412281584879\\
55	0.531334400526446\\
65	0.488756561790363\\
75	0.455007596787188\\
85	0.427405214604621\\
95	0.404284858993128\\
125	0.352447368950927\\
150	0.321738957179615\\
};

\end{axis}
\end{tikzpicture}
\vspace{-2pt}
\caption{RMSE  of $\hat{\boldsymbol{\theta}}$ (deg.) versus sample size $L$ when SNR is $-7$ dB  (left panel) and  $-9$ dB (right panel); $K=4$, $N=20$,  $M=1801$.}    \label{fig:4source_L}
\vspace{7pt}
\centerline{
%
%
\definecolor{mycolor1}{rgb}{1.00000,0.00000,1.00000}%
\definecolor{mycolor2}{rgb}{0.34200,0.87200,0.50300}%
\definecolor{mycolor3}{rgb}{0.92900,0.69400,0.12500}%
\begin{tikzpicture}

\begin{axis}[%
width= 0.44\columnwidth, 
height= 4.3cm, 
 y tick label style={font=\scriptsize} , 
 x tick label style={font=\scriptsize} , 
 scale only axis,
xmin=-14,
xmax=-8.5,
xlabel style={font=\scriptsize\color{white!15!black}},
xlabel={SNR (dB)},
ymode=log,
ymin=0.326413432578675,
ymax=29.9267370556831,
yminorticks=true,
ylabel style={font=\scriptsize\color{white!15!black}},
axis background/.style={fill=white},
title style={font=\scriptsize},
xmajorgrids,
ymajorgrids,
yminorgrids,
legend style={legend cell align=left, align=left, draw=white!15!black,font=\scriptsize,at={(1.0,0.82)},opacity=0.8}
]
\addplot [color=blue, line width=0.6pt, mark size=1.9pt, mark=o, mark options={solid, blue}]
  table[row sep=crcr]{%
-8.5	0.335490685414661\\
-9	0.357866082960278\\
-9.5	0.383029850881973\\
-10	0.409821424525365\\
-10.5	0.440324123042712\\
-11	0.474550032486916\\
-11.5	0.764228412627187\\
-12	1.17441525875646\\
-12.5	2.03162926408011\\
-13	3.3570078740847\\
-13.5	4.96374459455762\\
-14	7.59893571495376\\
};

\addplot [color=mycolor1, line width=0.6pt, mark size=2.5pt, mark=+, mark options={solid, mycolor1}]
  table[row sep=crcr]{%
-8.5	19.9167462469819\\
-9	20.6673680601409\\
-9.5	21.3835219924127\\
-10	22.1309484658927\\
-10.5	23.0619630647523\\
-11	23.8402337684288\\
-11.5	24.6927822220718\\
-12	25.7368458958487\\
-12.5	26.802732097058\\
-13	27.9038027922121\\
-13.5	28.9622543505623\\
-14	29.9267370556831\\
};

\addplot [color=mycolor2, line width=0.6pt, mark size=2.2pt, mark=diamond, mark options={solid, mycolor2}]
  table[row sep=crcr]{%
-8.5	16.403864418687\\
-9	16.7175747563255\\
-9.5	16.9599539150317\\
-10	17.2889594250203\\
-10.5	17.6248398763412\\
-11	18.0264948561832\\
-11.5	18.6293378017935\\
-12	19.2625090058382\\
-12.5	19.9870706307853\\
-13	21.0708587738295\\
-13.5	22.0704217087032\\
-14	23.0570165314885\\
};

\addplot [color=mycolor3, dashed, line width=0.6pt, mark size=2.2pt, mark=triangle, mark options={solid, rotate=180, mycolor3}]
  table[row sep=crcr]{%
-8.5	24.3595738322426\\
-9	24.6798305377244\\
-9.5	24.8997237675932\\
-10	25.2390721823223\\
-10.5	25.5369439362745\\
-11	25.9251764856195\\
-11.5	26.2658916116614\\
-12	26.6011715974539\\
-12.5	26.9175783574349\\
-13	27.4927116948958\\
-13.5	28.0186462274819\\
-14	28.8836395714465\\
};

\addplot [color=red, dashed, line width=0.6pt, mark size=1.7pt, mark=square, mark options={solid, red}]
  table[row sep=crcr]{%
-8.5	0.341243998726229\\
-9	0.362908987672299\\
-9.5	0.387364255793776\\
-10	0.414639924110868\\
-10.5	1.00670492201042\\
-11	2.17340881259525\\
-11.5	3.90410051100123\\
-12	6.44038975011087\\
-12.5	10.6030260586306\\
-13	15.5196186851782\\
-13.5	21.5586292514158\\
-14	27.7493670318201\\
};

\addplot [color=black, line width=0.8pt]
  table[row sep=crcr]{%
-8.5	0.326413432578675\\
-9	0.352447368950927\\
-9.5	0.381121974064003\\
-10	0.412760088446054\\
-10.5	0.447726141825284\\
-11	0.486431236697659\\
-11.5	0.529338812425413\\
-12	0.576970961121026\\
-12.5	0.629915476887484\\
-13	0.68883373140509\\
-13.5	0.754469481405001\\
-14	0.827658727316119\\
};

\end{axis}
\end{tikzpicture}
%
%
\definecolor{mycolor1}{rgb}{1.00000,0.00000,1.00000}%
\definecolor{mycolor2}{rgb}{0.34200,0.87200,0.50300}%
\definecolor{mycolor3}{rgb}{0.92900,0.69400,0.12500}%
\begin{tikzpicture}

\begin{axis}[%
width= 0.44\columnwidth, 
height= 4.3cm, 
 y tick label style={font=\scriptsize} , 
 x tick label style={font=\scriptsize} , 
scale only axis,
xmin=25,
xmax=150,
xlabel style={font=\scriptsize\color{white!15!black}},
xlabel={sample size, $L$},
ymode=log,
ymin=0.321738957179615,
ymax=43.022649224178,
yminorticks=true,
ylabel style={font=\color{white!15!black}},
axis background/.style={fill=white},
title style={font=\bfseries},
xmajorgrids,
ymajorgrids,
yminorgrids,
legend style={legend cell align=left, align=left, draw=white!15!black,font=\footnotesize,at={(0.97,0.77)}}
]
\addplot [color=blue, line width=0.6pt, mark size=1.9pt, mark=o, mark options={solid, blue}]
  table[row sep=crcr]{%
25	4.55681037569044\\
35	2.00175000104076\\
45	0.6848391538651\\
55	0.526631686602063\\
65	0.479943607798527\\
75	0.448683407315225\\
85	0.423840536051001\\
95	0.402549872686602\\
125	0.357866082960278\\
150	0.330127248193782\\
};
\addlegendentry{GCB}

\addplot [color=mycolor1, line width=0.6pt, mark size=2.4pt, mark=+, mark options={solid, mycolor1}]
  table[row sep=crcr]{%
25	43.022649224178\\
35	37.4376803234388\\
45	35.1833878906888\\
55	32.9162311269076\\
65	30.8022010360731\\
75	28.7486357427038\\
85	27.1115673664704\\
95	25.7153372160143\\
125	20.6673680601409\\
150	17.4065978908382\\
};
\addlegendentry{SCB}

\addplot [color=mycolor2, line width=0.6pt, mark size=2.2pt, mark=diamond, mark options={solid, mycolor2}]
  table[row sep=crcr]{%
25	23.9683904980427\\
35	21.8836018394291\\
45	20.165170014987\\
55	19.2729378870996\\
65	18.3753592871904\\
75	18.1040933713898\\
85	17.7272066158208\\
95	17.0429161432739\\
125	16.7175747563255\\
150	16.6677389148418\\
};
\addlegendentry{MUSIC}

\addplot [color=mycolor3, dashed, line width=0.6pt, mark size=2.2pt, mark=triangle, mark options={solid, rotate=180, mycolor3}]
  table[row sep=crcr]{%
25	29.6101667922804\\
35	28.1914805653631\\
45	27.1360929832456\\
55	26.807390977431\\
65	26.3388336451185\\
75	26.1023750001814\\
85	25.5516130876316\\
95	25.0576856752844\\
125	24.6798305377244\\
150	24.6265426608173\\
};
\addlegendentry{R-MUSIC}

\addplot [color=red, dashed, line width=0.6pt, mark size=1.7pt, mark=square, mark options={solid, red}]
  table[row sep=crcr]{%
25	8.47553604991053\\
35	4.8888308077358\\
45	2.04495685366057\\
55	1.32258529655621\\
65	0.766748807193944\\
75	0.458563845064131\\
85	0.432235892385936\\
95	0.411099014837058\\
125	0.362908987672299\\
150	0.334176201027741\\
};
\addlegendentry{IAA}

\addplot [color=black, line width=0.8pt]
  table[row sep=crcr]{%
25	0.788096275465223\\
35	0.66606292034513\\
45	0.587412281584879\\
55	0.531334400526446\\
65	0.488756561790363\\
75	0.455007596787188\\
85	0.427405214604621\\
95	0.404284858993128\\
125	0.352447368950927\\
150	0.321738957179615\\
};
\addlegendentry{CRLB}

\end{axis}
\end{tikzpicture}
\vspace{-2pt}
\caption{Correlated constant modulus sources scenario. Left: RMSE versus SNR when sample size is $L=125$.  Right: RMSE  versus sample size $L$ when SNR is $-9$dB; $K=4$, $N=20$,  $M=1801$. }    \label{fig:4source_corr}
 \vspace{7pt}
\centerline{
%
%
\definecolor{mycolor1}{rgb}{1.00000,0.00000,1.00000}%
\definecolor{mycolor2}{rgb}{0.34200,0.87200,0.50300}%
\definecolor{mycolor3}{rgb}{0.92900,0.69400,0.12500}%
\begin{tikzpicture}

\begin{axis}[%
width= 0.44\columnwidth, 
height= 4.3cm, 
 y tick label style={font=\scriptsize} , 
 x tick label style={font=\scriptsize} , 
 scale only axis,
xmin=1,
xmax=15,
xlabel style={font=\scriptsize\color{white!15!black}},
xlabel={angular separation, $\delta_{\theta}$},
ymode=log,
ymin=0.104713154609571,
ymax=70.5451163459712,
yminorticks=true,
ylabel style={font=\scriptsize\color{white!15!black}},
axis background/.style={fill=white},
title style={font=\bfseries},
xmajorgrids,
ymajorgrids,
yminorgrids,
legend style={legend cell align=left, align=left, draw=white!15!black,font=\footnotesize,at={(1.0,0.99)}}
]
\addplot [color=blue, line width=0.6pt, mark size=1.9pt, mark=o, mark options={solid, blue}]
  table[row sep=crcr]{%
1	48.2334973312116\\
2	41.0058106646688\\
3	21.4765091607862\\
5	3.69579790212255\\
7	0.246221580424358\\
9	0.106662083234858\\
12	0.151133936184655\\
15	0.110939022290025\\
};
\addlegendentry{GCB}

\addplot [color=mycolor1, line width=0.6pt, mark size=2.4pt, mark=+, mark options={solid, mycolor1}]
  table[row sep=crcr]{%
1	48.7678692296475\\
2	47.9408091045615\\
3	46.4127171983426\\
5	5.48747177972989\\
7	0.139985237316895\\
9	0.121825011115671\\
12	0.127483332243868\\
15	0.121104913195131\\
};
\addlegendentry{SCB}

\addplot [color=mycolor2, line width=0.6pt, mark size=2.3pt, mark=diamond, mark options={solid, mycolor2}]
  table[row sep=crcr]{%
1	46.9158657968355\\
2	18.2744985229874\\
3	10.2115125291669\\
5	0.191685506320466\\
7	0.128722958325236\\
9	0.116354630333305\\
12	0.118760824068097\\
15	0.113789864809364\\
};
\addlegendentry{MUSIC}

\addplot [color=mycolor3, dashed, line width=0.6pt, mark size=2.3pt, mark=triangle, mark options={solid, rotate=180, mycolor3}]
  table[row sep=crcr]{%
1	36.4064563836437\\
2	3.35526638491864\\
3	0.342785013394751\\
5	0.17596571967568\\
7	0.122805063217958\\
9	0.108154560871437\\
12	0.111917277115344\\
15	0.106347427492552\\
};
\addlegendentry{R-MUSIC}

\addplot [color=red, dashed, line width=0.6pt, mark size=1.7pt, mark=square, mark options={solid, red}]
  table[row sep=crcr]{%
1	70.5451163459712\\
2	70.2323284307161\\
3	68.9368128128939\\
5	46.5017364464316\\
7	0.260824589842801\\
9	0.149854151316095\\
12	0.125849645741787\\
15	0.117235375775971\\
};
\addlegendentry{IAA}

\addplot [color=black, line width=0.8pt]
  table[row sep=crcr]{%
1	1.38244776369384\\
2	0.504826727411272\\
3	0.304608562920055\\
5	0.168657203091345\\
7	0.120542007830761\\
9	0.107144191050822\\
12	0.110535371306102\\
15	0.104713154609571\\
};
\addlegendentry{CRLB}

\end{axis}
\end{tikzpicture}
%
%
\definecolor{mycolor1}{rgb}{1.00000,0.00000,1.00000}%
\definecolor{mycolor2}{rgb}{0.34200,0.87200,0.50300}%
\definecolor{mycolor3}{rgb}{0.92900,0.69400,0.12500}%
\begin{tikzpicture}

\begin{axis}[%
width= 0.44\columnwidth, 
height= 4.3cm, 
 y tick label style={font=\scriptsize} , 
 x tick label style={font=\scriptsize} , 
 scale only axis,
xmin=1,
xmax=15,
xlabel style={font=\scriptsize\color{white!15!black}},
xlabel={angular separation, $\delta_{\theta}$},
ymode=log,
ymin=0.150115066976414,
ymax=70.7711998400477,
yminorticks=true,
ylabel style={font=\scriptsize\color{white!15!black}},
axis background/.style={fill=white},
title style={font=\bfseries},
xmajorgrids,
ymajorgrids,
yminorgrids,
legend style={legend cell align=left, align=left, draw=white!15!black}
]
\addplot [color=blue, line width=0.6pt, mark size=1.9pt, mark=o, mark options={solid, blue}]
  table[row sep=crcr]{%
1	48.425663362863\\
2	46.5785016661835\\
3	39.6811404170797\\
5	11.0143233170871\\
7	0.290792480416311\\
9	0.150115066976414\\
12	0.201494085934717\\
15	0.157619372751787\\
};

\addplot [color=mycolor1, line width=0.6pt, mark size=2.4pt, mark=+, mark options={solid, mycolor1}]
  table[row sep=crcr]{%
1	48.8334494815456\\
2	48.582041651897\\
3	47.5779112362028\\
5	32.9880810920148\\
7	0.205604798906383\\
9	0.173435867109431\\
12	0.185831106115204\\
15	0.173743872793642\\
};

\addplot [color=mycolor2, line width=0.6pt, mark size=2.3pt, mark=diamond, mark options={solid, mycolor2}]
  table[row sep=crcr]{%
1	48.80015139594\\
2	36.9135778018514\\
3	16.8962584418364\\
5	0.44574491958218\\
7	0.18778533844082\\
9	0.166518067087829\\
12	0.172271491160513\\
15	0.163566092655742\\
};

\addplot [color=mycolor3, dashed, line width=0.6pt, mark size=2.3pt, mark=triangle, mark options={solid, rotate=180, mycolor3}]
  table[row sep=crcr]{%
1	36.0017733938113\\
2	26.4617744712553\\
3	1.36262935501543\\
5	0.270728130966082\\
7	0.184512213407766\\
9	0.160503180784795\\
12	0.16785622516504\\
15	0.158361402629988\\
};

\addplot [color=red, dashed, line width=0.6pt, mark size=1.7pt, mark=square, mark options={solid, red}]
  table[row sep=crcr]{%
1	70.7711998400477\\
2	70.5828060743783\\
3	69.8846692677773\\
5	65.3799685836579\\
7	0.276010144741098\\
9	0.19226856217281\\
12	0.180693478945237\\
15	0.165933721708397\\
};

\addplot [color=black, line width=0.8pt]
  table[row sep=crcr]{%
1	2.47515250993462\\
2	0.812134295200294\\
3	0.467088400422691\\
5	0.251040069349989\\
7	0.178878229978947\\
9	0.159214542891731\\
12	0.163931545126276\\
15	0.155345571718958\\
};

\end{axis}
\end{tikzpicture}
\vspace{-2pt}
\caption{RMSE of $\hat{\boldsymbol{\theta}}$ (deg.) versus $\delta_\theta=\theta_2-\theta_1$ in two source scenario when SNR is $-6$ dB (left panel) and $-3$ dB (right panel); $K=2$, $N=20$,  $M=1801$,  $\theta_1 = -30.02^o$.}    \label{fig:4source_angsep}
\end{figure}

{\bf RMSE versus SNR.}  \autoref{fig:4source} displays the performance when the sample size is $L=125$ (left panel)  and $L=25$ (right panel)  and SNR varies.  The proposed GCB exhibits the best performance across all SNR levels and each sample length $L$. At low sample size ($L=25$),  the second-best performing methods is IAA but its performance deteriorates and is similar to MUSIC and R-MUSIC when $L=125$. Additionally, note that for $L=25$, SCB and GCB exhibit opposite performances: SCB performs the worst among all methods, while GCB achieves the best results. This highlights GCB's high robustness to low sample sizes.  At very low SNR, the proposed GCB has superior performance. For example, at SNR of  $-11$ dB and $L=125$, GCB has one order of magnitude smaller RMSE  than IAA.
  
{\bf Percentage of misdetection.}  The main challenge in low SNR conditions is accurately estimating the $4^{\text{th}}$ source, which has the lowest power.   We say that misdetection  (MD) occurs when a DOA estimation error exceeds a threshold   $\tau= \min_{i \neq j} \frac 1 2  | \theta_i - \theta_ j |$. We compute MD using formula   $\mathsf{MD}= I( \max_i | \hat \theta_{(i)} - \theta_i | > \tau )$, where $\hat \theta_{(i)}$ are the ordered DOA estimates ($\hat \theta_{(1)} <  \ldots < \hat \theta_{(4)}$) and $I(\cdot)$ denotes the indicator function. \autoref{fig:source1}a shows the misdetection rate versus SNR for $L=125$. The figure highlights the poor performance of MUSIC and R-MUSIC in low SNR conditions. This is further confirmed in \autoref{fig:source1}b which displays the RMSE of the first (highest power)  DOA estimate $\hat \theta_1$. As can be noted, all methods perform more similarly in estimating $\theta_1$. 
  
{\bf RMSE versus sample size $L$}. \autoref{fig:4source_L} shows the RMSE versus sample size $L$ at SNR -$7$ db (left panel) and $-9$ dB (right panel). GCB consistently exhibits the best performance across all $L$ values.   MUSIC and R-MUSIC catch up a bit with sparsity based GCB and IAA methods when $L$ increases. Additionally,  also SCB outperforms MUSIC and R-MUSIC when SNR is  $-9$ dB.
 
 {\bf Coherent sources scenario.} We generate non-Gaussian sources with constant modulus  by generating the $k^{\text{th}}$ source as $s_k = \gamma_k \exp(\jmath \vartheta_k)$ where the phases $\vartheta_k$, $k=1,\ldots,K$ are independently and uniformly distributed in $[0,2\pi)$ while $\gamma_k$ are as earlier. We set $\vartheta_1=\vartheta_4$, i.e., source 1 and source 4 have identical phases, and are thus fully coherent.  We repeated previous simulation studies and display the  RMSE versus SNR when $L=125$ and RMSE versus sample size $L$ when SNR is $-9$ dB in  \autoref{fig:4source_corr}.  When comparing \autoref{fig:4source_corr} with  \autoref{fig:4source_L} and \autoref{fig:4source}, we can observe that the proposed GCB (as well as IAA) is robust to assumption of uncorrelated sources
 while  MUSIC, R-MUSIC and SCB are not able to localize the 4 sources due to coherence.   Again  GCB performs the best for all SNR levels and all sample sizes $L$. 
 
 {\bf RMSE versus angle separation}: We consider two  sources ($K=2$) 
 with varying 
angle separation  $\delta_{\theta}$.    The DOA of the $1^{\text{st}}$ source is $\theta_1 = -30.02^o$ (off-grid)  
and  $\theta_2 = \theta_1 + \delta_{\theta}$ for $2^{\text{nd}}$ source.  The SNR is $-3$ dB and the sample size $L = 125$. \autoref{fig:4source_angsep} illustrates that  GCB, IAA and SCB are not able to identify close-by sources as accurately as MUSIC an R-MUSIC when $\delta_\theta \leq 6^o$. GCB has better performance over IAA when  $\delta_\theta \leq 7^o$ but perform on par otherwise.

\section{Concluding remarks}   \label{sec:conclusion}
We proposed greedy Capon beamformer that greedily selects a  high-power source using Capon SINR maximization principle and spatial power spectrum.  The power of the selected source is estimated and subsequently the INCM  is updated. The method is computationally light and  performed favourably against SCB and some SOTA methods.


\end{document}